\renewcommand{\Im}{\operatorname{Im}}
\DeclareMathOperator{\interior}{int}
\DeclareMathOperator{\exterior}{ext}
\newtheorem{theorem}{Theorem}
\newtheorem{lemma}[theorem]{Lemma}
\newcommand{\eq}[1]{(\ref{eq:#1})}
\newcommand{\eqs}[2]{\eq{#1}--\eq{#2}}
\renewcommand{\sec}[1]{\hyperref[sec:#1]{Section~\ref*{sec:#1}}}
\newcommand{\thm}[1]{\hyperref[thm:#1]{Theorem~\ref*{thm:#1}}}
\newcommand{\lem}[1]{\hyperref[lem:#1]{Lemma~\ref*{lem:#1}}}
\newcommand{\prp}[1]{\hyperref[prop:#1]{Proposition~\ref*{prop:#1}}}
\newcommand{\fig}[1]{\hyperref[fig:#1]{Figure~\ref*{fig:#1}}}
\newcommand{\norm}[1]{\lVert{#1}\rVert}
\newcommand{\smfrac}[2]{{\textstyle\frac{#1}{#2}}}
\renewcommand{\d}{\mathrm{d}}
\newcommand{\ii}{\mathrm{i}}
\newcommand{\R}{\mathbb{R}}
\newcommand{\C}{\mathbb{C}}
\newcommand{\Z}{\mathbb{Z}}
\renewcommand{\S}{\mathcal{S}}
\newcommand{\defeq}{\colonequals}
\renewcommand{\>}{\rangle}
\newcommand{\<}{\langle}
\newcommand{\scattering}[1]{|\mathrm{sc}(#1)\>}
\newcommand{\bound}[2]{|\mathrm{bd}^{#1}(#2)\>}
\newcommand{\confined}{|\mathrm{cd}\>}
\begin{document}

\title{Levinson's theorem for graphs}
\author{
Andrew M.\ Childs\thanks{amchilds@uwaterloo.ca} \\[2pt]
Department of Combinatorics \& Optimization \\ 
and Institute for Quantum Computing \\
University of Waterloo
\and 
DJ Strouse\thanks{danieljstrouse@gmail.com} \\[2pt]
Department of Physics, University of Southern California \\
and Institute for Quantum Computing, University of Waterloo}
\date{}
\maketitle

\begin{abstract}
We prove an analog of Levinson's theorem for scattering on a weighted $(m+1)$-vertex graph with a semi-infinite path attached to one of its vertices.  In particular, we show that the number of bound states in such a scattering problem is equal to $m$ minus half the winding number of the phase of the reflection coefficient (where each so-called \emph{half-bound state} is counted as half a bound state).
\end{abstract}


\section{Introduction}

Levinson's theorem \cite{Lev49}, a classic result in quantum scattering theory, reveals a surprising connection between the scattering and bound states of a potential on a half line: the bound states of the potential are counted by the winding of the phase of the reflection coefficient.

Here we consider the setting of scattering on graphs.  In this model, scattering occurs on a discrete object, namely a (weighted) finite graph.  Semi-infinite paths are attached to some of the vertices, allowing for incoming and outgoing wave packets.  This setting was introduced (in the case of two semi-infinite paths) as a framework for developing algorithms for quantum computers \cite{FG98}.  Subsequently, scattering on graphs was used to discover a quantum algorithm for evaluating game trees \cite{FGG08}, resolving a longstanding open question in quantum query complexity.  The generalization to many semi-infinite paths was used to design a model of universal quantum computation \cite{Chi09}.

In fact, discrete analogs of quantum scattering theory were considered prior to the advent of quantum computation.  In the most natural discrete version of a potential on a half line, scattering occurs on a semi-infinite path of vertices with weighted self-loops representing the potential.  Case and Kac proved an analog of Levinson's theorem for this setting \cite{CK73}, and Hinton, Klaus, and Shaw generalized their result to apply to systems with bound states of a special kind, called \emph{half-bound states} \cite{HKS91}.

In this paper, we prove an analog of Levinson's theorem for scattering on an arbitrary weighted finite graph with a semi-infinite path attached to one of its vertices.  In particular, if the original finite graph contains $m$ vertices, we show that the number of bound states is equal to $m$ minus half the winding number of the phase of the reflection coefficient, where each half-bound state is counted as half a bound state.  The main idea of the proof is to consider the analytic continuation of the reflection coefficient from the unit circle to the rest of the complex plane.  Using the Cauchy argument principle, we relate the winding of the phase of the reflection coefficient around the unit circle to the zeros and poles of the analytically continued reflection coefficient inside the unit circle.  Finally, we relate these zeros and poles to bound states.

The remainder of the article is organized as follows.  In \sec{scatter} we introduce the model of scattering on graphs and describe scattering states and bound states.  In \sec{levinson} we prove the main result.  We conclude in \sec{discussion} with a discussion of possible directions for future work.


\section{Scattering on graphs}
\label{sec:scatter}

Consider an infinite path of vertices, each corresponding to a basis state $|x\>$ for $x \in \Z$, where vertex $x$ is connected to vertices $x \pm 1$.  The eigenstates of the adjacency matrix of this graph, parametrized by $k \in [-\pi,\pi)$, are the momentum states $|\tilde k\>$ with
\begin{align}
  \<x|\tilde k\> = e^{\ii k x}
\end{align}
for each $x \in \Z$.  These states are normalized so that $\<\tilde k|\tilde k'\> = 2\pi \, \delta(k-k')$.  Letting $H_{\mathrm{path}}$ denote the adjacency matrix of this infinite path, we have
\begin{align}
  \<x|H_{\mathrm{path}}|\tilde k\>
  &= \<x-1|\tilde k\> + \<x+1|\tilde k\> \\
  &= 2\cos k \, \<x|\tilde k\>, \label{eq:patheig}
\end{align}
which shows that $|\tilde k\>$ is an eigenstate of $H_{\mathrm{path}}$ with eigenvalue $2\cos k$.

Now let $G$ be an $(m+1)$-vertex graph, and create an infinite graph by attaching a semi-infinite path to the first vertex of $G$.  We refer to the remaining $m$ vertices of $G$ as \emph{internal vertices}.  Label the basis states for vertices on the semi-infinite path as $|x\>$, where $x=1$ for the vertex in the original graph and $x=2,3,\ldots$ moving out along the path.

In general, we may assign complex weights to the directed edges of $G$, subject to the Hermiticity constraint that edges in opposite directions have conjugate weights.  We denote the weighted adjacency matrix of $G$ by the block matrix
\begin{align}\label{eq:adjmat}
  \begin{pmatrix}
    a & b^\dag \\
    b & D
  \end{pmatrix}
\end{align}
where $a \in \R$, $b \in \C^m$ (a column vector), and $D = D^\dag \in \C^{m \times m}$.  A semi-infinite path is attached to the first vertex of $G$, with all edges on the semi-infinite path having weight $1$.  The adjacency matrix of the resulting infinite weighted graph is denoted $H$.  See \fig{graph} for a graphical representation of this Hamiltonian.  Note that the setting considered in Refs.~\cite{CK73,HKS91}, restricted to the case where the potential is only nonzero over a finite interval, is the special case where $b^\dag=(1,0,\ldots,0)$ and $D$ is a tridiagonal matrix with off-diagonal entries equal to $1$.

\begin{figure}
\capstart
\begin{center}
\setlength{\unitlength}{.1cm}
\begin{picture}(61,32)(-10,-6)
\thicklines
\multiput(20,10)(10,0){3}{\circle*{2}}
\put(20,10){\line(1,0){25}}
\multiput(47,10)(2,0){3}{\circle*{.5}}
\qbezier(20,10)(19,11)(19,13)
\qbezier(19,13)(19,15)(20,15)
\qbezier(21,13)(21,15)(20,15)
\qbezier(20,10)(21,11)(21,13)
\put(20,10){\line(-1,1){13}}
\put(20,10){\line(-3,2){13}}
\put(20,10){\line(-1,-1){13}}
\multiput(13,11)(0,-2){3}{\circle*{.5}}
\thinlines
\put(0,10){\oval(20,32)}
\put(19,16){\makebox(2,2){$a$}}
\put(13,18){\makebox(2,2){$b$}}
\put(-1,9){\makebox(2,2){$D$}}
\put(24,7){\makebox(2,2){$1$}}
\put(34,7){\makebox(2,2){$1$}}
\end{picture}
\end{center}
\caption{The scattering Hamiltonian.\label{fig:graph}}
\end{figure}
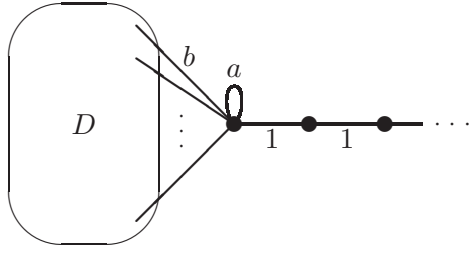

To analyze scattering on $G$, it is useful to determine the eigenstates of $H$.  Next we describe the two families of eigenstates, scattering states and bound states.

\subsection{Scattering states}
\label{sec:scattering}

For each $k \in \R \setminus \pi\Z$, there is a scattering state $\scattering{k}$ of momentum $k$.  This state has the form
\begin{align}\label{eq:scattering}
  \<x \scattering{k} &= e^{-\ii k x} + R(e^{\ii k}) \, e^{\ii k x}
\end{align}
on the semi-infinite path, where $R(e^{\ii k})$ is called the reflection coefficient.  We can view $R$ as a function of $e^{\ii k}$ since it is $2\pi$-periodic in $k$.  By a similar calculation as in \eq{patheig}, such states satisfy the eigenvalue condition
\begin{align}\label{eq:eigcondsc}
  H \scattering{k} = 2 \cos k \, \scattering{k}  
\end{align}
at vertices of the semi-infinite path.  Requiring \eq{eigcondsc} to hold at the $m+1$ vertices of $G$ gives a system of $m+1$ linear equations that determines the reflection coefficient $R(e^{\ii k})$ and the amplitudes of $\scattering{k}$ at the $m$ internal vertices of $G$.

Given the Hamiltonian $H$ defined in terms of the weighted adjacency matrix \eq{adjmat}, it is straightforward to solve for the reflection coefficient.  The eigenvalue conditions for the scattering states $\scattering{k}$ at the vertices of $G$ can be collected into the vector equation
\begin{align}\label{eq:eigeqn}
  \begin{pmatrix} a & b^\dag \\ b & D \end{pmatrix}
  \begin{pmatrix}e^{-\ii k}+R(e^{\ii k}) \, e^{\ii k} \\ \psi(e^{\ii k})\end{pmatrix}
  +\begin{pmatrix}e^{-2\ii k}+R(e^{\ii k}) \, e^{2\ii k} \\ 0\end{pmatrix}
  &=
  2\cos k \begin{pmatrix}e^{-\ii k}+R(e^{\ii k}) \, e^{\ii k} \\ \psi(e^{\ii k})\end{pmatrix},
\end{align}
where $\psi(e^{\ii k}) \in \C^m$ is the vector of amplitudes at the $m$ internal vertices of $G$ in the state $\scattering{k}$.  The lower block gives
\begin{align}
  \psi(e^{\ii k}) = (e^{-\ii k}+R(e^{\ii k}) \, e^{\ii k}) (2\cos k - D)^{-1} b.
\end{align}
Applying this identity to the upper block of \eq{eigeqn} and solving for $R(e^{\ii k})$, we find\footnote{Note that the expression \eq{rcoeff} depends on the choice \eq{scattering} for the form of the scattering states.  In particular, it assumes a particular convention for the overall phases of these states.  While the stated form will turn out to be convenient for our purposes, other natural choices can be obtained by taking states of the same form, but numbering vertices on the semi-infinite path starting from an integer other than $1$.  Such a choice modifies the reflection coefficient by factors of $e^{\ii k}$, which ultimately modifies the statement of Levinson's theorem.}
\begin{align}\label{eq:rcoeff}
  R(e^{\ii k}) = -\frac{Q(e^{-\ii k})}{Q(e^{\ii k})}
\end{align}
where
\begin{align}
  Q(e^{\ii k}) &\defeq 1 - e^{\ii k}(a+C(e^{\ii k})) \label{eq:qmatrix} \\
  C(e^{\ii k}) &\defeq b^\dag (2\cos k - D)^{-1} b. \label{eq:c}
\end{align}
It is easy to see that $|R(e^{\ii k})|=1$ since $Q(e^{\ii k})^* = Q(e^{-\ii k})$.

The definition of $R(e^{\ii k})$ as a parameter describing scattering states only makes sense for $k \in \R$.  Nevertheless, it is convenient to extend the definition of $R$ to the whole complex plane.  For all $z \in \C$ we let
\begin{align}\label{eq:rcomplex}
  R(z) \defeq - \frac{Q(z^{-1})}{Q(z)}
\end{align}
where
\begin{align}
  Q(z) &\defeq 1 - z(a + C(z)) \label{eq:qcomplex} \\
  C(z) &\defeq b^\dag(z+z^{-1} - D)^{-1}b. \label{eq:ccomplex}
\end{align}
With $z = e^{\ii k}$, these definitions agree with \eqs{rcoeff}{c}.  In fact, $R(z)$ is the analytic continuation of $R(e^{\ii k})$ from the unit circle $\Gamma \defeq \{z \in \C \colon |z|=1\}$ to $\C$.  Note that for $z \notin \Gamma$, we may have $|R(z)| \ne 1$.

\subsection{Bound states}
\label{sec:bound}

Scattering states alone may not span the entire state space.  In general, to obtain a complete basis, we must also include bound states.

For $\kappa \ge 0$, a state $\bound{\pm}{\kappa}$ of the form
\begin{align}
  \<x \bound{\pm}{\kappa} = (\pm e^{-\kappa})^x
\end{align}
on the semi-infinite path satisfies
\begin{align}\label{eq:eigcondbd}
  H \bound{\pm}{\kappa} = \pm 2 \cosh\kappa \, \bound{\pm}{\kappa}
\end{align}
at vertices of the semi-infinite path.  If some choice of the amplitudes at the $m$ internal vertices of the graph also satisfies the condition \eq{eigcondbd} on the vertices of $G$, then $\bound{\pm}{\kappa}$ is an eigenstate of $H$ with eigenvalue $\pm2\cosh\kappa$.  The $m$ eigenvalue conditions at the internal vertices of $G$ give linear equations for the $m$ amplitudes at internal vertices in terms of $\kappa$.  The remaining eigenvalue condition at the first vertex of $G$ gives a transcendental equation in $\kappa$ with a discrete set of solutions, each corresponding to a bound state.

The states $\bound{\pm}{\kappa}$ can be characterized as follows.  Such a state satisfies the vector equation
\begin{align}\label{eq:eigeqnbd}
  \begin{pmatrix} a & b^\dag \\ b & D \end{pmatrix}
  \begin{pmatrix}\pm e^{-\kappa} \\ \psi^\pm(\kappa) \end{pmatrix}
  +\begin{pmatrix} e^{-2\kappa} \\ 0\end{pmatrix}
  &=
  \pm 2\cosh\kappa \begin{pmatrix}\pm e^{-\kappa} \\
  \psi^\pm(\kappa) \end{pmatrix}
\end{align}
where $\psi^\pm(\kappa) \in \C^m$ is a vector of amplitudes at the $m$ internal vertices of $G$.  The lower block gives
\begin{align}
  \psi^\pm(\kappa) = e^{-\kappa} (2\cosh\kappa \mp D)^{-1} b .
\end{align}
Using this in the upper block of \eq{eigeqnbd} gives $Q(\pm e^{-\kappa}) = 0$, where $Q(z)$ is given in \eq{qcomplex}.  In other words, bound states correspond to values of $\kappa$ for which $\pm e^{-\kappa}$ is a root of $Q(z)$.

The states $\bound{\pm}{\kappa}$ with $\kappa>0$ have amplitudes that decay exponentially along the semi-infinite path, so we refer to them collectively as \emph{evanescent bound states}.  However, in the preceding discussion, we have allowed the possibility that $\kappa=0$.  States of the form $\bound{\pm}{0}$ may or may not exist depending on whether $Q(\pm 1) = 1 \mp (a + C(\pm 1)) = 0$.  Using the terminology of Ref.~\cite{HKS91}, we refer to such states as \emph{half-bound states}.  This reflects that they are, in a sense, halfway between scattering and bound states: when they exist, we have $\bound{+}{0} = \scattering{0}$ and $\bound{-}{0} = \scattering{\pi}$.
It also reflects the role they play in Levinson's theorem.

Finally, there may be bound states with zero amplitude on the semi-infinite path.  We refer to such states as \emph{confined bound states} (they have also been called \emph{bound states of the second kind} in the terminology of Ref.~\cite{VB09}).  A state $\confined$ with zero amplitude on the semi-infinite path satisfies the eigenvalue equation $H\confined = \lambda\confined$ on the vertices of the semi-infinite path for any value of $\lambda$.  To satisfy the eigenvalue equation at the internal vertices of $G$, a state $\confined$ must satisfy the vector equation
\begin{align}
  \begin{pmatrix} a & b^\dag \\ b & D \end{pmatrix}
  \begin{pmatrix}0 \\ \psi\end{pmatrix}
  &=
  \lambda \begin{pmatrix}0 \\ \psi\end{pmatrix}.
\end{align}
The lower block says that $\psi$ is an eigenvector of $D$ with eigenvalue $\lambda$, and the upper block says that $b^\dag \psi=0$.  Thus, confined bound states correspond to eigenvectors $\psi$ of $D$ that also satisfy the further condition of orthogonality to $b$.  Note that if $\lambda$ is a degenerate eigenvalue of $D$, then there may be multiple confined bound states corresponding to $\lambda$.  In general, the number of confined bound states corresponding to $\lambda$ is the dimension of the subspace $\{\psi\colon D\psi = \lambda\psi,\, b^\dag \psi = 0\}$.  If $b$ is orthogonal to the $\lambda$-eigenspace of $D$, this dimension is simply the multiplicity of $\lambda$; otherwise it is the multiplicity of $\lambda$ minus $1$.

It can be shown that the scattering states $\{\scattering{k}\colon k \in (-\pi,0)\}$, together with the bound states $\{\bound{\pm}{\kappa}\colon \kappa \ge 0,\; Q(\pm e^{-\kappa}) = 0\}$
and confined bound states
$\{\confined\}$, 
form a complete basis for the state space \cite{Gol08}.


\section{Levinson's theorem}
\label{sec:levinson}

In order to state Levinson's theorem, we first introduce the concept of the winding number of a complex-valued function around a curve in the complex plane.  Let $\gamma$ be a simple, closed, positively-oriented (i.e., counterclockwise) curve in $\C$, and let $f\colon \C \to \C$ be a meromorphic function with no zeros or poles on $\gamma$.
Then we can write $f(z) = r(z) \, e^{\ii \theta(z)}$ for some functions $r\colon \C \to \R$ and $\theta\colon \C \to \R$, where $0 < r(z) < \infty$ for all $z \in \gamma$.  For each $z \in \C$, $\theta(z)$ is only defined up to an integer multiple of $2\pi$.  However, we can require $\theta(z)$ to be continuous as we vary $z \in \gamma$, in which case $\theta(z)$ is determined up to an overall offset by an integer multiple of $2\pi$.  Then for $f(z)$ to be single-valued, the function $\theta(z)$ must change by some integer multiple of $2\pi$ as $z$ winds around $\gamma$.  This integer is the \emph{winding number of $f$ around $\gamma$},
\begin{align}
  w_\gamma(f)
  &\defeq \frac{1}{2\pi\ii} \oint_\gamma \frac{f'(z)}{f(z)} \d{z} \\
  &=      \frac{1}{2\pi} \oint_\gamma \d{\theta(z)}.
\end{align}

We can compute $w_\gamma(f)$ by appealing to the \emph{Cauchy argument principle}, which says that the winding number of $f$ around $\gamma$ is determined by the zeros and poles of $f$ inside $\gamma$.  In particular, we have the following:

\begin{theorem}[Cauchy argument principle]\label{thm:argprinciple}
Let $\gamma$ be a simple, closed, positively-oriented curve in $\C$, and let $f\colon \C \to \C$ be a meromorphic function with no zeros or poles on $\gamma$.  Then
\begin{align}
  w_\gamma(f) = Z_{\interior(\gamma)}(f)-P_{\interior(\gamma)}(f)
\end{align}
where $Z_{\interior(\gamma)}(f)$ is the number of zeros of $f$ inside $\gamma$, counted with their multiplicity, and $P_{\interior(\gamma)}(f)$ is the number of poles of $f$ inside $\gamma$, counted with their order.
\end{theorem}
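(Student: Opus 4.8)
The plan is to read the integrand $f'(z)/f(z)$ as the logarithmic derivative of $f$ and to evaluate the defining contour integral by the residue theorem. Since $\gamma$ is a simple closed curve it bounds a bounded region $\interior(\gamma)$, and because $f$ is meromorphic its zeros and poles are isolated; hence only finitely many of them lie in $\interior(\gamma)$. The logarithmic derivative $f'/f$ is holomorphic on $\interior(\gamma)$ away from these finitely many points, and by hypothesis it is regular on $\gamma$ itself. The residue theorem then identifies the winding number with a sum of residues,
\begin{align}
  w_\gamma(f)
  &= \frac{1}{2\pi\ii} \oint_\gamma \frac{f'(z)}{f(z)} \, \d z \\
  &= \sum_{z_0 \in \interior(\gamma)} \operatorname{Res}_{z=z_0} \frac{f'(z)}{f(z)},
\end{align}
where the sum ranges over the zeros and poles of $f$ inside $\gamma$.

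The heart of the argument is then the purely local computation of each residue. Near a zero $z_0$ of multiplicity $n$, I would factor $f(z) = (z-z_0)^n g(z)$ with $g$ holomorphic and nonvanishing in a neighborhood of $z_0$; taking the logarithmic derivative gives $f'(z)/f(z) = n/(z-z_0) + g'(z)/g(z)$, and since $g'/g$ is holomorphic at $z_0$ the residue equals $n$. An identical manipulation at a pole $z_0$ of order $p$, writing $f(z) = (z-z_0)^{-p} h(z)$ with $h$ holomorphic and nonvanishing, produces the residue $-p$. Summing these contributions, each zero contributes its multiplicity and each pole contributes minus its order, so the total is exactly $Z_{\interior(\gamma)}(f) - P_{\interior(\gamma)}(f)$, as claimed.

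The only substantive inputs are classical: the residue theorem, together with the fact that a simple closed curve bounds a well-defined bounded interior. Granting these, the step I expect to be the crux is verifying that $f'/f$ has a simple pole at precisely the zeros and poles of $f$, with residue equal to the signed multiplicity; the local factorizations above are designed to handle exactly this, and nothing else in the argument is delicate. As a sanity check, the result is consistent with the alternative geometric reading of $w_\gamma(f) = \frac{1}{2\pi}\oint_\gamma \d\theta(z)$ as the number of times the image curve $f(\gamma)$ winds around the origin, which gains or loses a full turn precisely when $f$ acquires or loses a zero or pole inside $\gamma$.
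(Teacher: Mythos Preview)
Your argument is the standard, correct proof of the argument principle via the residue theorem and the local factorization of $f$ at each zero or pole to read off the residue of the logarithmic derivative. There is nothing to compare against, however: the paper does not supply a proof of this theorem at all. It is stated as a classical result from complex analysis and then invoked as a black box in the proof of \thm{levinson}. So your proposal is not so much an alternative to the paper's proof as a filling-in of background that the paper deliberately omits.
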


Levinson's theorem relates the winding of the reflection coefficient around the unit circle $\Gamma$ to the number of bound states.  In particular, we have the following:

\begin{theorem}[Levinson's theorem for graphs]\label{thm:levinson}
Let $R(z)$ be the reflection coefficient of a graph scattering problem with $m$ internal vertices in which there are $n_c$ confined bound states, $n_h$ half-bound states, and $n_b$ evanescent bound states.  Furthermore, suppose that either $a \ne 0$ or $\norm{b} \ne 1$ (or both).  Then
\begin{align}
  w_\Gamma(R) = 2 (m - n_b - n_c - \smfrac{1}{2}n_h).
\end{align}
\end{theorem}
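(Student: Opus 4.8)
The plan is to apply the Cauchy argument principle (\thm{argprinciple}) to the analytically continued reflection coefficient $R$ on the unit circle $\Gamma$ and then to match the resulting zeros and poles inside $\Gamma$ with the various kinds of bound states. The first move is to clear the denominators hidden in $C$ so that $R$ becomes a ratio of polynomials whose roots can be located relative to $\Gamma$. Diagonalizing the Hermitian matrix $D$ and expanding $b$ in its eigenbasis, I would write
\begin{align}
  C(z) = \sum_\mu \frac{\gamma_\mu z}{z^2 - \mu z + 1},
\end{align}
where the sum runs over the $p$ distinct eigenvalues $\mu$ of $D$ with $\gamma_\mu \defeq \norm{P_\mu b}^2 > 0$ (the \emph{active} eigenvalues), $P_\mu$ being the spectral projector onto the $\mu$-eigenspace. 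Since a confined bound state is precisely an eigenvector of $D$ orthogonal to $b$, the count in \sec{bound} gives $n_c = m - p$, i.e.\ $p = m - n_c$.

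Multiplying $Q$ by $\Pi(z) \defeq \prod_\mu (z^2 - \mu z + 1)$ (of degree $2p$) produces a polynomial $\tilde Q \defeq \Pi \, Q$. Its leading term comes from $-az\,\Pi(z)$ when $a \ne 0$ and from $(1 - \norm{b}^2)z^{2p}$ when $a = 0$, so $\deg \tilde Q = 2p + \epsilon$, where $\epsilon \defeq 1$ if $a \ne 0$ and $\epsilon \defeq 0$ if $a = 0$; it is exactly the hypothesis $a \ne 0$ or $\norm{b} \ne 1$ that prevents the leading coefficient from vanishing. Because each factor $z^2 - \mu z + 1$ is palindromic, $\Pi$ is reciprocal, and a short manipulation of \eq{rcomplex} yields
\begin{align}
  R(z) = -z^{-\epsilon}\,\frac{\tilde Q^*(z)}{\tilde Q(z)},
\end{align}
where $\tilde Q^*(z) \defeq z^{\deg \tilde Q}\,\tilde Q(z^{-1})$ is the reversed polynomial, whose roots are the reciprocals of those of $\tilde Q$. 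I would also note that $\tilde Q(0) = 1$, so the origin is not a root and contributes only the pole of order $\epsilon$ coming from the prefactor.

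Next I would sort the roots of $\tilde Q$ by modulus. A root $z_0$ with $0 < |z_0| < 1$ gives a state with amplitude $z_0^x$ on the path, which decays and is therefore a genuine normalizable eigenstate of the Hermitian operator $H$ with eigenvalue $z_0 + z_0^{-1}$; reality of the spectrum forces $z_0 + z_0^{-1} \in \R$, and since $|z_0| \ne 1$ this forces $z_0 \in \R$, so $z_0 = \pm e^{-\kappa}$ with $\kappa > 0$. These are exactly the evanescent bound states, so the number $N_{\mathrm{in}}$ of roots inside $\Gamma$ equals $n_b$. For a root on $\Gamma$, write $z = e^{\ii k}$; then $a + C(e^{\ii k})$ is real, so $Q(e^{\ii k}) = 0$ can hold only when $e^{\ii k}$ is real, i.e.\ $z = \pm 1$, giving the half-bound states. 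Hence the number $N_{\mathrm{on}}$ of roots on $\Gamma$ equals $n_h$, and the remaining $N_{\mathrm{out}} = \deg \tilde Q - N_{\mathrm{in}} - N_{\mathrm{on}}$ roots lie outside.

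Finally I would run the argument principle. At each half-bound state the common simple zero of $Q(z)$ and $Q(z^{-1})$ at $z = \pm 1$ cancels, so $R$ extends analytically through $\pm 1$ with $|R| = 1$ there; I would therefore indent $\Gamma$ by small arcs excluding $\pm 1$, which leaves $w_\Gamma(R)$ unchanged. Inside this contour, $R$ has $N_{\mathrm{out}}$ zeros (the reciprocals of the exterior roots of $\tilde Q$, supplied by $\tilde Q^*$) and $N_{\mathrm{in}} + \epsilon$ poles (the interior roots of $\tilde Q$, plus a simple pole at the origin when $a \ne 0$), so \thm{argprinciple} gives $w_\Gamma(R) = N_{\mathrm{out}} - N_{\mathrm{in}} - \epsilon$. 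Substituting $N_{\mathrm{out}} = 2p + \epsilon - n_b - n_h$, $N_{\mathrm{in}} = n_b$, and $p = m - n_c$ collapses this to $w_\Gamma(R) = 2(p - n_b) - n_h = 2(m - n_b - n_c - \smfrac{1}{2}n_h)$, as claimed. I expect the main obstacle to be the bookkeeping at the unit circle: one must verify that the interior roots are simple enough that the bound-state count agrees with the root count with multiplicity, and one must treat the degenerate band-edge case in which an active eigenvalue equals $\pm 2$ (so that $\pm 1$ is a root of $\Pi$ and the pole of $C$ collides with a potential half-bound state), which calls for a separate local analysis of $\tilde Q$ near $\pm 1$.
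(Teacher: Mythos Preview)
Your proposal is correct and follows essentially the same route as the paper: apply the argument principle to $R$, pass to the polynomial numerator of $Q$ (the paper does this via \lem{degree}, computing the degree after cancelling common factors, whereas you clear only the \emph{active} denominators from the outset and read off $p=m-n_c$ directly), and then partition its $2p+\epsilon$ roots into interior ($n_b$), boundary ($n_h$), and exterior. The two obstacles you flag are exactly what the paper's supporting lemmas handle---simplicity of the real roots in $[-1,1]$ is \lem{norepeatedroots}, and the band-edge case $\mu=\pm2$ is absorbed into the common-factor analysis of \lem{degree}---while your self-adjointness argument for the reality of interior roots is a clean alternative to the explicit computation in \lem{boundstatesreal}.
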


The technical condition on $a$ and $b$ poses no essential difficulty.  If $a=0$ and $\norm{b}=1$, then by a suitable change of basis, the problem is equivalent to one in which $a=0$ and $b^\dag=(1,0,\ldots,0)$.  In this case, we can effectively view the first vertex of the finite graph as part of the semi-infinite path, and we can replace the $(m+1) \times (m+1)$ matrix \eq{adjmat} by the $m \times m$ matrix $D$.  Repeating this process, we eventually arrive at a graph with $a \ne 0$ or $\norm{b} \ne 1$ (or else we arrive at a graph with a single vertex, in which case the bound states are easily characterized).

The proof of \thm{levinson} uses the argument principle.  Note that since $|R(z)|=1$ for $|z|=1$, $R(z)$ has no zeros or poles on the unit circle.  Our goal is to count the zeros and poles of $R(z)$ inside the unit circle.

It will be helpful to view $R(z)$ as a quotient as in \eq{rcomplex}
and consider the numerator and denominator of this expression separately.  Indeed, except at the point $z=0$, every zero of $R(z)$ inside $\Gamma$ corresponds to a zero of $Q(z^{-1})$, and every pole of $R(z)$ inside $\Gamma$ corresponds to a zero of $Q(z)$.

\begin{lemma}\label{lem:numdenom}
Let $0 < |z| < 1$.  Then $R(z)=0$ if and only if $Q(z^{-1})=0$, and $R(z)^{-1} = 0$ if and only if $Q(z)=0$.
\end{lemma}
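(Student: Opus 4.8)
The plan is to read the zeros and poles of $R$ directly off its representation $R(z) = -Q(z^{-1})/Q(z)$ in \eq{rcomplex}, the only subtlety being that a cancellation between numerator and denominator could in principle spoil the clean correspondence. The key structural fact I would establish first is the inversion symmetry of $C$: since $C(z) = b^\dag(z + z^{-1} - D)^{-1}b$ depends on $z$ only through $z + z^{-1}$, which is invariant under $z \mapsto z^{-1}$, we have $C(z^{-1}) = C(z)$. Consequently $Q(z) = 1 - z(a + C(z))$ and $Q(z^{-1}) = 1 - z^{-1}(a + C(z))$ are both expressed through the single quantity $a + C(z)$, and in particular they have poles at exactly the same points, namely the poles of $C$ (the values of $z$ with $0 < |z| < 1$ for which $z + z^{-1}$ is an eigenvalue of $D$ to which $b$ is not orthogonal).

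The heart of the argument is to rule out a common zero of the numerator and denominator in the punctured disk. I would suppose $Q(z) = Q(z^{-1}) = 0$ for some $z$ with $0 < |z| < 1$, which gives $z(a + C(z)) = 1 = z^{-1}(a + C(z))$ and hence $(z - z^{-1})(a + C(z)) = 0$. Since $z = z^{-1}$ forces $z = \pm 1$, which is excluded by $|z| < 1$, we must have $a + C(z) = 0$; but then $Q(z) = 1 \ne 0$, a contradiction. Thus at every point of the punctured disk at most one of $Q(z), Q(z^{-1})$ vanishes.

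It then remains to check the one place where a spurious zero or pole of $R$ could hide. At a common pole $z_0$ of $Q(z)$ and $Q(z^{-1})$ (a pole of $C$), both factors blow up, dominated by $-z C(z)$ and $-z^{-1} C(z)$ respectively, so their ratio tends to the finite nonzero value $R(z_0) = -z_0^{-2}$ (finite and nonzero because $z_0 \ne 0$); hence such a point is neither a zero nor a pole of $R$, consistent with $Q(z_0^{\pm 1}) \ne 0$. Away from poles of $C$, both $Q(z)$ and $Q(z^{-1})$ are finite, and the previous paragraph guarantees they are not both zero, so $R(z) = 0$ exactly when $Q(z^{-1}) = 0$ (with $Q(z)$ finite and nonzero) and $R(z)^{-1} = 0$ exactly when $Q(z) = 0$ (with $Q(z^{-1})$ finite and nonzero). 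I expect the main obstacle to be precisely this exclusion of simultaneous zeros together with the pole-cancellation check; once the symmetry $C(z) = C(z^{-1})$ is in hand, both reduce to short computations, and the remaining case analysis is routine bookkeeping.
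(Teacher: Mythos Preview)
Your proposal is correct and follows essentially the same route as the paper: both arguments hinge on the inversion symmetry $C(z^{-1})=C(z)$, the observation that at a pole of $C$ the ratio $-Q(z^{-1})/Q(z)$ tends to the finite nonzero value $-z^{-2}$, and the exclusion of a simultaneous zero $Q(z)=Q(z^{-1})=0$ by showing it would force $z^2=1$. The only difference is organizational---the paper argues each implication separately by contradiction, whereas you first isolate the two potential obstructions (common zeros and common poles) and dispose of them up front---but the mathematical content is identical.
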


\begin{proof}
Suppose that $R(z) = 0$.  We claim that $Q(z^{-1})=0$; assume for a contradiction that this is not the case.  To satisfy $R(z)=0$, $Q(z)$ must be unbounded.  Since $a$ is a constant and $|z| < 1$, this can only happen if $C(z)$ is unbounded.  However, if $C(z)$ is unbounded then $R(z) = -z^{-2} \ne 0$.  This is a contradiction, so we must have $Q(z^{-1}) = 0$.

Similarly, if $R(z)^{-1} = 0$ and $Q(z) \ne 0$, then $Q(z^{-1})$ must be unbounded, and since $a$ and $z^{-1}$ are finite, $C(z^{-1}) = C(z)$ must again be unbounded, showing that $R(z)^{-1} = -z^2 \ne 0$.  This is a contradiction, showing that $Q(z) = 0$.

Clearly, if $Q(z^{-1})=0$ and $Q(z) \ne 0$ then $R(z)=0$, and if $Q(z)=0$ and $Q(z^{-1}) \ne 0$ then $R(z)^{-1}=0$.  It remains to consider the possibility that $Q(z)=Q(z^{-1})=0$.  In this case, $z(a+C(z)) = z^{-1}(a+C(z^{-1})) = z^{-1}(a+C(z))$.  Since $a+C(z) \ne 0$ (as $a+C(z)=0$ implies $Q(z)=Q(z^{-1})=1 \ne 0$), this shows that $z^2=1$, which violates the condition $|z|<1$.
\end{proof}

Thus it suffices for us to count the zeros of $Q(z^{-1})$ and of $Q(z)$ inside $\Gamma$.  First we count the zeros of $Q(z)$ inside $\Gamma$.  We saw in \sec{bound} that every bound state corresponds to a root of $Q(z)$.  In particular, for any $z = \pm e^{-\kappa} \in {[-1,0)} \cup {(0,1]}$ with $Q(z) = 0$, we can solve \eq{eigeqnbd} to find a bound state.  Next we show that zeros of $Q(z)$ on and inside the unit circle can only occur on the real axis, which demonstrates that every zero of $Q(z)$ on and inside the unit circle (except at $z=0$) corresponds to a bound state.

\begin{lemma}\label{lem:boundstatesreal}
  Let $|z| \le 1$ and $Q(z)=0$.  Then $\Im(z)=0$.
\end{lemma}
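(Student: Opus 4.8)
The plan is to exploit the Hermiticity of $D$ to expose the sign of the imaginary part of the defining equation. For $z \ne 0$ (the case $z=0$ being immediate, since $Q(0)=1$), the condition $Q(z)=0$ is equivalent to
\begin{align}
  z^{-1} = a + C(z).
\end{align}
Diagonalizing $D = \sum_j \lambda_j |v_j\rangle\langle v_j|$ with real eigenvalues $\lambda_j$, and writing $\beta_j \defeq \langle v_j|b\rangle$ and $w \defeq z+z^{-1}$, the sandwiched resolvent becomes $C(z) = \sum_j |\beta_j|^2/(w-\lambda_j)$: a sum of simple poles in $w$, located at the real points $\lambda_j$, with \emph{nonnegative} weights $|\beta_j|^2$. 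This nonnegativity, which relies on both $D=D^\dag$ and the form $b^\dag(\cdots)b$, is the crucial structural feature driving the argument.

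Next I would take the imaginary part of $z^{-1} = a + C(z)$. Since $a$ and the $\lambda_j$ are real, only $\Im(z^{-1})$ and the $\Im[(w-\lambda_j)^{-1}]$ survive. A direct computation gives $\Im(z^{-1}) = -\Im(z)/|z|^2$, then $\Im(w) = \Im(z)(|z|^2-1)/|z|^2$, and hence $\Im[(w-\lambda_j)^{-1}] = -\Im(w)/|w-\lambda_j|^2$. Assembling these and cancelling a common factor of $\Im(z)$ — legitimate precisely in the case $\Im(z)\ne 0$ that we wish to rule out — I expect to arrive at
\begin{align}
  1 = (|z|^2-1)\sum_j \frac{|\beta_j|^2}{|w-\lambda_j|^2}.
\end{align}
For $|z|\le 1$ the right-hand side is nonpositive (a nonnegative sum scaled by $|z|^2-1\le 0$), while the left-hand side is $1$. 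This contradiction forces $\Im(z)=0$, which is the claim. The boundary $|z|=1$ is handled uniformly: there $|z|^2-1=0$ collapses the right-hand side to $0\ne 1$, so no separate treatment of the unit circle is required.

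Conceptually, for $|z|<1$ this is the statement that the vector with amplitude $z^x$ on the semi-infinite path, together with internal amplitudes $z(w-D)^{-1}b$, is a genuine square-summable eigenvector of the Hermitian operator $H$ with eigenvalue $w$, so that $w$ — and therefore $z$ — must be real. The one point requiring care, and the main obstacle, is the possibility that $w$ coincides with an eigenvalue $\lambda_j$ of $D$ with $\beta_j\ne 0$, where $C(z)$ blows up and the sum above is ill defined. I would dispatch this first by noting that $Q(z)=0$ forces $C(z)$ to be finite, which guarantees $w\ne\lambda_j$ whenever $\beta_j\ne 0$; every displayed term is then well defined. With that caveat handled, the imaginary-part identity does all the work, and no case analysis on the argument of $z$ is needed.
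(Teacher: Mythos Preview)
Your argument is correct and follows essentially the same route as the paper: take the imaginary part of the eigenvalue condition $z^{-1}=a+C(z)$, expand $C(z)$ via the spectral decomposition of $D$ with nonnegative weights, and obtain a sign contradiction when $|z|\le 1$ and $\Im(z)\ne 0$. Your presentation in terms of $|z|^2-1$ is marginally cleaner than the paper's polar-coordinate computation, and you are more explicit than the paper about why $w\ne\lambda_j$ (for $\beta_j\ne 0$), which is a point the paper leaves implicit.
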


\begin{proof}
  The claim is trivial for $z=0$, so we can assume $|z|>0$.  Then
\begin{align}
  0 
  &= \Im(z^{-1} Q(z)) \\
  &= \Im(z^{-1}) - \Im(C(z)).
\end{align}
With $z = re^{i\theta}$, we have $\Im(z^{-1}) = -r^{-1} \sin\theta$.  Let $D = \sum_\lambda \lambda P_\lambda$ be the spectral decomposition of $D$, where $P_\lambda$ projects onto the eigenspace of $D$ with eigenvalue $\lambda$.  Then we have
\begin{align}\label{eq:cdecomp}
  C(z)
  &= \sum_\lambda \frac{b^\dag P_\lambda b}{z+z^{-1}-\lambda}.
\end{align}
Therefore,
\begin{align}
  \Im(C(z))
  &= - \sum_\lambda \frac{\norm{P_\lambda b}^2 (r-r^{-1})\sin\theta}{[(r+r^{-1})\cos\theta - \lambda]^2 + [(r-r^{-1})\sin\theta]^2}.
\end{align}
Now assume for a contradiction that $\sin\theta \ne 0$.  Then we have
\begin{align}
  r^{-1} = (r-r^{-1}) \sum_\lambda \frac{\norm{P_\lambda b}^2}{[(r+r^{-1})\cos\theta - \lambda]^2 + [(r-r^{-1})\sin\theta]^2}.
\end{align}
For $0 < r \le 1$, the left hand side is positive but the right hand side is non-positive.  This is a contradiction, so we must have $\sin\theta=0$, i.e., $\Im(z)=0$.
\end{proof}

Furthermore, each bound state contributes a zero of multiplicity one.

\begin{lemma}\label{lem:norepeatedroots}
Suppose that $Q(x)=0$ where $0<|x|\le 1$.  Then $\frac{\d}{\d{x}} Q(x) \ne 0$.
\end{lemma}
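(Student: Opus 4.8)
The plan is to differentiate $Q$ directly, evaluate the derivative at the root, and then use the root condition $Q(x)=0$ to eliminate the constant $a$, thereby reducing the claim to an elementary sign analysis. First I would invoke \lem{boundstatesreal}: since $|x| \le 1$ and $Q(x)=0$, the root $x$ is real, so together with the hypothesis I may assume $x \in [-1,0) \cup (0,1]$ throughout. This reduction is exactly what turns the statement into a real-variable computation rather than a genuinely complex one.

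Next I would compute $Q'$. From $Q(z) = 1 - z(a + C(z))$ we get $Q'(z) = -(a+C(z)) - z\,C'(z)$, and differentiating the spectral form \eq{cdecomp} of $C$ gives
\begin{align}
  z\,C'(z) = -(z - z^{-1}) \sum_\lambda \frac{\norm{P_\lambda b}^2}{(z + z^{-1} - \lambda)^2}.
\end{align}
Evaluating at the root and using $Q(x)=0$, i.e.\ $a + C(x) = x^{-1}$ (legitimate since $x \ne 0$), to replace the first term yields
\begin{align}\label{eq:qprimeroot}
  Q'(x) = -\frac{1}{x} + (x - x^{-1}) \sum_\lambda \frac{\norm{P_\lambda b}^2}{(x + x^{-1} - \lambda)^2}.
\end{align}
Here I would note that the sum is finite: because $Q(x)=0$ forces $C(x)$ to be finite, every $\lambda$ with $\norm{P_\lambda b} \ne 0$ has $x + x^{-1} - \lambda \ne 0$, so no denominator vanishes.

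Finally I would read off the sign of \eq{qprimeroot}. Every summand is nonnegative, since $\norm{P_\lambda b}^2 \ge 0$ and the denominators are real squares, so the sign of the sum is governed entirely by the prefactor $x - x^{-1}$. For $0 < x \le 1$ we have $-1/x < 0$ and $x - x^{-1} \le 0$, so both terms of \eq{qprimeroot} are nonpositive and the first is strictly negative, giving $Q'(x) < 0$; for $-1 \le x < 0$ the signs reverse, $-1/x > 0$ and $x - x^{-1} \ge 0$, giving $Q'(x) > 0$. In either case $Q'(x) \ne 0$, which is the claim. I expect the only point requiring care to be verifying that the two contributions reinforce rather than cancel; the key observation is that $x - x^{-1}$ always carries the same sign as $-1/x$ on the relevant intervals, which is precisely what rules out cancellation.
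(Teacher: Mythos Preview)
Your proof is correct and follows essentially the same line as the paper's: invoke \lem{boundstatesreal} to reduce to real $x$, differentiate $Q$ using the spectral expansion \eq{cdecomp}, substitute $a+C(x)=x^{-1}$ from the root condition, and observe that the resulting expression $-x^{-1} + (x-x^{-1})\sum_\lambda \norm{P_\lambda b}^2/(x+x^{-1}-\lambda)^2$ has a definite sign on each of $(0,1]$ and $[-1,0)$. The only additions are your explicit remark that the sum is finite and your slightly more verbose sign discussion, neither of which changes the argument.
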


\begin{proof}
By \lem{boundstatesreal}, $x \in \R$.  Since $Q(x)=0$, $a+C(x) = x^{-1}$.  Then we have
\begin{align}
  \frac{\d}{\d{x}} Q(x)
  &= -(a+C(x)) - x \frac{\d}{\d{x}}C(x) \\
  &= -x^{-1} - x\frac{\d}{\d{x}} \sum_\lambda \frac{\norm{P_\lambda b}^2}{x+x^{-1}-\lambda} \\
  &= -x^{-1} - \sum_\lambda \frac{\norm{P_\lambda b}^2 (x^{-1}-x)}{(x+x^{-1}-\lambda)^2},
\end{align}
which is strictly negative for $0<x\le 1$ and strictly positive for $-1\le x<0$.
\end{proof}

So far, we have ignored the point $z=0$.  Although there are not bound states at  $z=0$, $R(z)$ can nevertheless have a pole at $z=0$.

\begin{lemma}\label{lem:zero}
If $a \ne 0$, then $R(z)$ has a simple pole at $z=0$.  If $a=0$ and $\norm{b} \ne 1$, then $R(z)$ has no zeros or poles at $z=0$.
\end{lemma}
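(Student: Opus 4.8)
The plan is to study the numerator $Q(z^{-1})$ and the denominator $Q(z)$ of $R(z)$ (as written in \eq{rcomplex}) separately in a neighborhood of the origin. The only real input is the asymptotic behavior of $C$ near $z=0$ and near $z=\infty$, both of which can be read off from the spectral sum \eq{cdecomp}; everything else is bookkeeping.

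First I would dispose of the denominator. As $z \to 0$ the matrix $z+z^{-1}-D$ is dominated by its $z^{-1}$ term, so each summand of \eq{cdecomp} is analytic at the origin and expanding the geometric series gives $C(z) = \norm{b}^2 z + O(z^2)$. Hence $Q(z) = 1 - z(a+C(z)) = 1 - az - \norm{b}^2 z^2 + O(z^3)$, and in particular $Q(0)=1 \ne 0$. Thus the denominator of $R$ is analytic and nonvanishing at $z=0$, and the nature of $R$ there is determined entirely by the numerator $Q(z^{-1})$.

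Next I would expand the numerator. Setting $u = z^{-1}$ and letting $u \to \infty$, the same spectral sum yields $C(u) = \norm{b}^2 u^{-1} + (b^\dag D b)\,u^{-2} + O(u^{-3})$, so that $u\,C(u) = \norm{b}^2 + O(u^{-1})$ and therefore $Q(u) = 1 - au - u\,C(u) = -au + (1 - \norm{b}^2) + O(u^{-1})$. Substituting $u = z^{-1}$ gives $Q(z^{-1}) = -a\,z^{-1} + (1 - \norm{b}^2) + O(z)$ as $z \to 0$. It is worth remarking that any genuine poles of $C$ occur at finite $u$, where $u+u^{-1}$ meets an eigenvalue of $D$, so they are irrelevant to this $u \to \infty$ (equivalently $z \to 0$) expansion.

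Finally I would split into the two cases. If $a \ne 0$, the leading term $-a\,z^{-1}$ makes $Q(z^{-1})$ have a simple pole at $z=0$, and dividing by the analytic nonzero $Q(z)$ shows $R(z) = -Q(z^{-1})/Q(z)$ has a simple pole there. If instead $a = 0$ and $\norm{b} \ne 1$, then $Q(z^{-1}) \to 1 - \norm{b}^2 \ne 0$, so $R(0) = -(1-\norm{b}^2)$ is finite and nonzero, giving neither a zero nor a pole at the origin. The only delicate point is keeping enough terms in the expansion of $C$ to recover the constant $1 - \norm{b}^2$, which is exactly what separates the two cases; but this is a routine geometric-series estimate applied to \eq{cdecomp}, so I do not anticipate any serious obstacle.
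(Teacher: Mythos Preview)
Your argument is correct and follows essentially the same approach as the paper: analyze $Q(z)$ and $Q(z^{-1})$ separately near $z=0$, using the small-$z$ behavior of $C$ to show $Q(0)=1$ and to extract the leading terms $-az^{-1}+(1-\norm{b}^2)$ of $Q(z^{-1})$. The one minor simplification you miss is the symmetry $C(z^{-1})=C(z)$ (since $C$ depends only on $z+z^{-1}$), which lets you write $Q(z^{-1})=z^{-1}(z-a-C(z))$ directly and avoids the separate $u\to\infty$ expansion.
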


\begin{proof}
We have $R(z) = -Q(z^{-1}) / Q(z)$.
Note that $C(0)=0$.
Since $Q(0)=1$, $Q(z)$ contributes no zeros or poles at $z=0$.  Now $Q(z^{-1}) = z^{-1} (z-a-C(z))$.  If $a \ne 0$, then $\lim_{z \to 0} z-a-C(z) = -a \ne 0$, so $R(z)$ has a simple pole at $z=0$ as claimed.  On the other hand, if $a=0$, then $\lim_{z \to 0} Q(z^{-1}) = \lim_{z \to 0} (1-z^{-1} C(z)) = 1-\norm{b}^2$.  Thus if $a=0$ and $\norm{b} \ne 1$, $R(z)$ has no zeros or poles at $z=0$ as claimed.
\end{proof}

We must also count the zeros of $Q(z^{-1})$ inside $\Gamma$.  We do this by relating the zeros of $Q(z^{-1})$ inside $\Gamma$ to the zeros of $Q(z)$ outside $\Gamma$.  Observe that if $Q(z_{\mathrm{in}}^{-1})=0$ where $|z_{\mathrm{in}}|<1$, then $Q(z_{\mathrm{out}})=0$ for some $z_{\mathrm{out}}$ with $|z_{\mathrm{out}}|>1$, namely $z_{\mathrm{out}} = z_{\mathrm{in}}^{-1}$.  Thus, it will be sufficient to determine the total number of zeros of $Q(z)$.

To count the zeros of $Q(z)$, we use the fact that it is a rational function of $z$ and determine the degree of its numerator.  First we consider the case where $a \ne 0$.

\begin{lemma}\label{lem:degree}
If $a \ne 0$, then $Q(z)$ is a rational function of $z$ whose numerator has degree $2m - 2n_c + 1$.
If $a = 0$ and $\norm{b} \ne 1$, then $Q(z)$ is a rational function of $z$ whose numerator has degree $2m - 2n_c$.
\end{lemma}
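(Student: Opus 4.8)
The plan is to reduce everything to the partial-fraction structure of $C(z)$ and to count how many distinct eigenvalues of $D$ actually appear in it. Starting from the spectral decomposition $D = \sum_\lambda \lambda P_\lambda$, I would use \eq{cdecomp} to write $C(z) = \sum_\lambda \norm{P_\lambda b}^2 / (z + z^{-1} - \lambda)$, so that only those eigenvalues $\lambda$ with $P_\lambda b \ne 0$ contribute. Writing $z + z^{-1} - \lambda = (z^2 - \lambda z + 1)/z$, each surviving term becomes $\norm{P_\lambda b}^2 z / (z^2 - \lambda z + 1)$, a rational function with a quadratic denominator whose constant term is $1$.

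The crucial bookkeeping step is to identify the number of contributing eigenvalues. Let $S$ be the set of distinct eigenvalues $\lambda$ of $D$ with $P_\lambda b \ne 0$. From the discussion of confined bound states in \sec{bound}, an eigenvalue of multiplicity $\mu$ contributes $\mu$ confined bound states when $b$ is orthogonal to its eigenspace and $\mu - 1$ otherwise; summing over all eigenvalues and using that the multiplicities sum to $m$, I obtain $n_c = m - |S|$, so that exactly $m - n_c$ eigenvalues contribute. I expect this combinatorial identity $|S| = m - n_c$ to be the main conceptual point of the lemma, linking the number of confined bound states to the analytic structure of $C(z)$; everything after it is degree counting.

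Next I would put $C(z)$ over the common denominator $\Pi(z) \defeq \prod_{\lambda \in S}(z^2 - \lambda z + 1)$, which has degree $2(m - n_c)$. Distinct values of $\lambda$ give coprime quadratics, since a shared root would force $z = 0$, impossible because the constant term is $1$; hence no cancellation occurs and the numerator $P(z)$ of $C(z)$ has degree at most $2(m - n_c) - 1$. Writing $C(z) = P(z)/\Pi(z)$, the numerator of $Q(z) = 1 - z(a + C(z))$ is $\tilde Q(z) \defeq (1 - az)\Pi(z) - z P(z)$. I would then verify that this quotient is already in lowest terms: evaluating $\tilde Q$ at any root $z_0$ of $\Pi$ gives $-z_0 P(z_0)$, which is nonzero because $z_0 \ne 0$ and, at a root of the factor $z^2 - \mu z + 1$, only the $\lambda = \mu$ term of $P$ survives and carries the nonzero weight $\norm{P_\mu b}^2$.

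Finally I would read off the degree of $\tilde Q$ in each case by comparing leading coefficients. When $a \ne 0$, the term $-az\,\Pi(z)$ has degree $2(m - n_c) + 1$ and dominates $zP(z)$, giving degree $2m - 2n_c + 1$. When $a = 0$, both $\Pi(z)$ and $zP(z)$ have degree $2(m - n_c)$, so I must compute the leading coefficient of $\tilde Q = \Pi - zP$; the leading coefficient of $zP(z)$ works out to $\sum_{\lambda \in S} \norm{P_\lambda b}^2 = \norm{b}^2$ by completeness of the projectors, so $\tilde Q$ has leading coefficient $1 - \norm{b}^2$, which is nonzero precisely because $\norm{b} \ne 1$, giving degree $2m - 2n_c$. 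The hypothesis on $a$ and $b$ is exactly what prevents this leading coefficient from vanishing, which is why it appears in the statement.
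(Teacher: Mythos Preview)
Your proof is correct and follows essentially the same approach as the paper's---spectral decomposition of $C(z)$, common denominator, degree count, and verification that the fraction is in lowest terms---though you streamline it by restricting to eigenvalues with $P_\lambda b \ne 0$ from the outset and invoking $|S| = m - n_c$ directly, whereas the paper keeps all $\bar m$ distinct eigenvalues in the denominator and then subtracts $2(n_c + \bar m - m)$ common factors at the end. Your coprimality argument for the quadratics (subtract the two equations to force $z = 0$) is also simpler than the paper's explicit case analysis of the roots $\mu_\pm$.
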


\begin{proof}
Using \eq{cdecomp}, we have
\begin{align}
  Q(z) 
  &= 1 - az - z^2 \sum_\lambda \frac{\norm{P_\lambda b}^2}{z^2-\lambda z+1} \\
  &= \frac{(1-az)\prod_\lambda (z^2 - \lambda z + 1) - z^2 \sum_\lambda \norm{P_\lambda b}^2 \prod_{\lambda' \ne \lambda} (z^2 - \lambda' z + 1)}{\prod_\lambda (z^2 - \lambda z + 1)}.
  \label{eq:commondenom}
\end{align}

If $\lambda = \pm 2$, then $z^2-\lambda z+1 = (z \mp 1)^2$ has two identical roots ($z = \pm 1$).  For $\lambda \ne \pm 2$, we claim that the two roots of $z^2-\lambda z+1$ are distinct. Furthermore, for $\lambda \ne \lambda'$, $z^2-\lambda z+1$ and $z^2-\lambda' z+1$ have no common roots.
To see this, write $z^2-\lambda z+1 = (z-\mu_+)(z-\mu_-)$ where
\begin{align}
  \mu_\pm &\defeq \smfrac{\lambda}{2} \pm \sqrt{\big(\smfrac{\lambda}{2}\big)^2-1}.
\end{align}
Observe that if $|\lambda| < 2$ then $\mu_+$ is on the half of the unit circle in the complex plane with positive imaginary part and $\mu_-$ is on the half of the unit circle with negative imaginary part.  If $\lambda > 2$ then $\mu_+ \in (1,\infty)$ and $\mu_- \in (0,1)$, and if $\lambda < -2$ then $\mu_+ \in (-1,0)$ and $\mu_- \in (-\infty,-1)$.  Thus, clearly $\mu_+ \ne \mu'_-$ regardless of whether $\lambda = \lambda'$.
We also claim that $\mu_+ \ne \mu'_+$ and $\mu_- \ne \mu'_-$ when $\lambda \ne \lambda'$.
For $|\lambda|<2$ this is obvious, simply by comparing real parts.
For $|\lambda|>2$, observe that $\mu_+$ is a strictly increasing function of $\lambda$ for $\lambda>2$ and a strictly decreasing function for $\lambda<-2$, and $\mu_-$ is a strictly decreasing function of $\lambda$ for $\lambda>2$ and a strictly increasing function for $\lambda<-2$; therefore, for $\lambda \ne \lambda'$, $\mu_+ \ne \mu'_+$ and $\mu_- \ne \mu'_-$.

Now let $J(z)$ denote the numerator of \eq{commondenom}, so that
\begin{align}
  Q(z) &= \frac{J(z)}{\prod_\lambda (z^2-\lambda z + 1)}.
\end{align}
If $a \ne 0$, then $J(z)$ is a polynomial with leading term $-az^{2\bar m+1}$, where $\bar m$ is the number of distinct eigenvalues of $D$.  On the other hand, if $a=0$ and $\norm{b} \ne 1$, then the leading term of $J(z)$ is $(1 - \sum_\lambda \norm{P_\lambda b}^2) z^{2 \bar m} = (1-\norm{b}^2) z^{2 \bar m}$.

It remains to count common factors between the numerator and denominator of \eq{commondenom}.  For an eigenvalue $\lambda \ne \pm 2$ of $D$, let $\mu \ne \pm 1$ be either of the two roots of $z^2-\lambda z + 1$.  Then $\mu$ is a simple root of $\prod_{\lambda'} (z^2 - \lambda' z + 1)$, and
\begin{align}
  \label{eq:Jmu}
  J(\mu) = - \mu^2 \norm{P_\lambda b}^2 \prod_{\lambda' \ne \lambda} (\mu^2-\lambda' \mu + 1),
\end{align}
so $\mu$ is a root of $J(z)$ if and only if $\norm{P_\lambda b}=0$.
If $\lambda = \pm 2$ is an eigenvalue of $D$, then $\mu = \pm 1$ is a double root of $\prod_{\lambda'} (z^2 - \lambda' z + 1)$.  If $\norm{P_\lambda b} \ne 0$ then \eq{Jmu} shows that $\mu$ is not a root of $J(z)$, and if $\norm{P_\lambda b} = 0$ then it is easy to see that $\mu$ is a root of $J(z)$ of multiplicity at least $2$.  Thus we see that the number of common factors between the numerator and denominator of \eq{commondenom} is twice the number of values $\lambda$ such that $\norm{P_\lambda b} = 0$.  Since the number of confined bound states corresponding to $\lambda$ is either the multiplicity of $\lambda$ (if $\norm{P_\lambda b} = 0$) or the multiplicity of $\lambda$ minus $1$ (if $\norm{P_\lambda b} \ne 0$), the total number of common factors is $2(n_c + \bar m - m)$.  Then the degree of the numerator of $Q(z)$ is $2m - 2n_c + 1$ if $a \ne 0$ and $2m - 2n_c$ if $a = 0$ and $\norm{b} \ne 1$, as claimed.
\end{proof}

Finally, we are ready to prove the main result.

\begin{proof}[Proof of \thm{levinson}]
For any $\S \subseteq \C$ and any meromorphic function $f$, let $Z_\S(f)$ denote the number of zeros of $f$ that belong to $\mathcal{S}$ (counted with their multiplicity) and let $P_\S(f)$ denote the number of poles of $f$ that belong to $\S$ (counted with their order).  Let $\interior(\Gamma)$ denote the interior of $\Gamma$, and let $\exterior(\Gamma)$ denote the exterior of $\Gamma$ (both open sets, not including points on $\Gamma$).  Since $R(z)$ is a rational function of $z$, in particular it is a meromorphic function of $z$; since $|R(z)|=1$ for $|z|=1$, it has no zeros or poles on $\Gamma$.  Thus, by the argument principle (\thm{argprinciple}),
\begin{align}
  w_\Gamma(R)=Z_{\interior(\Gamma)}(R)-P_{\interior(\Gamma)}(R).
\end{align}
By \lem{zero}, $R(z)$ has a simple pole at $z=0$ if $a \ne 0$ and no zeros or poles at $z=0$ otherwise, so
\begin{align}
w_\Gamma(R)=Z_{\interior(\Gamma)\setminus\{0\}}(R)-P_{\interior(\Gamma)\setminus\{0\}}(R)-\delta[a \ne 0]
\end{align}
where $\delta[a \ne 0]=1$ if $a \ne 0$ and $\delta[a \ne 0]=0$ if $a = 0$.

By \lem{numdenom},
\begin{align}
  Z_{\interior(\Gamma)\setminus\{0\}}(R)
  &= Z_{\interior(\Gamma)\setminus\{0\}}(Q(z^{-1})) \\
  \text{and} \quad
  P_{\interior(\Gamma)\setminus\{0\}}(R) &= Z_{\interior(\Gamma)\setminus\{0\}}(Q(z)).
\end{align}
In particular,
\begin{align}
  w_\Gamma(R)=Z_{\interior(\Gamma)\setminus\{0\}}(Q(z^{-1})) - Z_{\interior(\Gamma)\setminus\{0\}}(Q(z)) - \delta[a \ne 0].
\end{align}

By \lem{boundstatesreal} and \lem{norepeatedroots}, $Z_{\interior(\Gamma) \setminus \{0\}}(Q(z)) = Z_{\interior(\Gamma)}(Q(z)) = n_b$.
Also, as discussed above, we can relate zeros of $Q(z^{-1})$ inside $\Gamma$ to zeros of $Q(z)$ outside $\Gamma$.  In particular,
\begin{align}
  Z_{\interior(\Gamma)\setminus\{0\}}(Q(z^{-1}))
  &= Z_{\exterior(\Gamma)}(Q(z)) \\
  &= Z_{\C}(Q(z)) - Z_{\Gamma}(Q(z)) - Z_{\interior(\Gamma)}(Q(z)).
\end{align}

To compute $Z_{\Gamma}(Q(z))$, recall that zeros of $Q(z)$ at $z=\pm1$ correspond to half-bound states.  By \lem{boundstatesreal}, these are the only zeros with $|z|=1$, and by \lem{norepeatedroots}, they occur with multiplicity one.  Thus we have $Z_{\Gamma}(Q(z)) = n_h$, the total number of half-bound states.

Finally, by \lem{degree}, $Z_{\C}(Q(z)) = 2m - 2n_c + \delta[a \ne 0]$.
Combining these expressions, we have
\begin{align}
  w_\Gamma(R)=2m-2n_b-2n_c-n_h
\end{align}
as claimed.
\end{proof}


\section{Discussion}
\label{sec:discussion}

We conclude by discussing several possible directions for future work.

First, it might be interesting to investigate possible applications of bound states to quantum computation.  Existing quantum algorithms based on graph scattering \cite{FGG08,Chi09} rely solely on the properties of scattering states, but one can imagine that bound states could also encode information about the solution of a computational problem.  While it might be possible to extract information about bound states directly using phase estimation, it could also be useful to infer such information indirectly via Levinson's theorem.

One could also consider generalizing \thm{levinson} to the case where $n > 1$ semi-infinite paths are attached to $n$ vertices of an $(n+m)$-vertex graph.  In general, the scattering is described by an $n \times n$ matrix, the $S$-matrix, instead of a single reflection coefficient (see Ref.~\cite{Chi09} for details).  Similar calculations to those in \sec{scattering} can be used to determine the $S$-matrix in this more general case: for a weighted adjacency matrix of the form $\left(\begin{smallmatrix}A & B^\dag \\ B & D\end{smallmatrix}\right)$, where $A \in \C^{n \times n}$, $B \in \C^{m \times n}$, and $D \in \C^{m \times m}$, the $S$-matrix at momentum $k$ is $S(e^{\ii k})$, where $S(z) = -Q(z^{-1})/Q(z)$ with $Q(z) = I - z(A + B^\dag[(z + z^{-1}) I - D]^{-1}B)$.  Furthermore, similar conditions to those described in \sec{bound} can be used to characterize the bound states.  We expect the number of bound states in such a scattering problem to be related to the winding number of $\det S(z)$, but we leave the details as a topic for future work.

Finally, the original theorem of Levinson for a potential on a half-line not only describes a relationship between the number of bound states and the phase of the reflection coefficient, but also sheds light on the extent to which the potential can be inferred from the behavior of the reflection coefficient.  It might be fruitful to consider such inverse problems in the setting of scattering on graphs.  The weighted graph corresponding to a given form of the reflection coefficient is certainly far from unique, since one may perform arbitrary similarity transformations on the $m$ internal vertices without affecting the scattering behavior, but one might attempt to characterize the graph corresponding to a given reflection coefficient up to such changes of basis.


\section*{Acknowledgments}
We thank Jeffery Goldstone for sharing his proof of completeness of scattering and bound states \cite{Gol08} and Gorjan Alagic, Aaron Denney, and Cris Moore for discussions of methods for computing $S$-matrices of graphs.
We also thank an anonymous referee for pointing out several corrections to an earlier version of this paper.
This work was supported in part by MITACS, NSERC, QuantumWorks, and the US ARO/DTO.


\begin{bibdiv}
\begin{biblist}
  
\bib{CK73}{article}{
      author={Case, Kenneth M.},
      author={Kac, Mark},
       title={A discrete version of the inverse scattering problem},
        date={1973},
     journal={Journal of Mathematical Physics},
      volume={14},
      number={5},
       pages={594\ndash 603},
}
  
\bib{Chi09}{article}{
      author={Childs, Andrew M.},
       title={Universal computation by quantum walk},
        date={2009},
     journal={Physical Review Letters},
      volume={102},
      number={18},
       pages={180501},
      eprint={arXiv:0806.1972},
}

\bib{FGG08}{article}{
      author={Farhi, Edward},
      author={Goldstone, Jeffrey},
      author={Gutmann, Sam},
       title={A quantum algorithm for the {H}amiltonian {NAND} tree},
        date={2008},
     journal={Theory of Computing},
      volume={4},
      number={1},
       pages={169\ndash 190},
      eprint={arXiv:quant-ph/0702144},
}

\bib{FG98}{article}{
      author={Farhi, Edward},
      author={Gutmann, Sam},
       title={Quantum computation and decision trees},
        date={1998},
     journal={Physical Review A},
      volume={58},
      number={2},
       pages={915\ndash 928},
      eprint={arXiv:quant-ph/9706062},
}

\bib{Gol08}{misc}{
      author={Goldstone, Jeffrey},
        note={Personal communication},
       month={May},
        year={2008},
}

\bib{HKS91}{article}{
      author={Hinton, Don B.},
      author={Klaus, Martin},
      author={Shaw, J. Ken},
       title={Half-bound states and {L}evinson's theorem for discrete systems},
        date={1991},
     journal={SIAM Journal on Mathematical Analysis},
      volume={22},
      number={3},
       pages={754\ndash 768},
}

\bib{Lev49}{article}{
      author={Levinson, Norman},
       title={Determination of the potential from the asymptotic phase},
        date={1949},
     journal={Physical Review},
      volume={75},
      number={9},
       pages={1445},
}

\bib{VB09}{article}{
      author={Varbanov, Martin},
      author={Brun, Todd A.},
       title={Quantum scattering theory on graphs with tails},
        date={2009},
     journal={Physical Review A},
      volume={80},
      number={5},
       pages={052330},
      eprint={arXiv:0906.2825},
}

\end{biblist}
\end{bibdiv}

\end{document}